\documentclass[12pt,onecolumn,twoside]{IEEEtran}

\usepackage{setspace}
\doublespacing

\usepackage{color,soul}
\usepackage{amsfonts}
\usepackage{dsfont}
\usepackage[cmex10]{amsmath}
\usepackage{amssymb}






\usepackage{algorithm}
\usepackage{algpseudocode}
\algnewcommand{\algorithmicgoto}{\textbf{go to}}%
\algnewcommand{\Goto}[1]{\algorithmicgoto~\ref{#1}}%
\newtheorem{theorem}{Theorem}
\newtheorem{proof}{Proof}

\newtheorem{remark}{Remark}
\newtheorem{definition}{Definition}


\usepackage{graphicx}
\graphicspath{ {./} }
\DeclareGraphicsExtensions{.pdf}


%


\begin{document}

\title{Continuous Time Quantum Consensus \& Quantum Synchronisation }

\author{Saber Jafarizadeh,~\IEEEmembership{Member,~IEEE}, \thanks{e-mail: saber.jafarizadeh@sydney.edu.au }   }

\maketitle


\bibliographystyle{ieeetran}


\begin{abstract}

Distributed consensus algorithm over networks of quantum systems has been the focus of recent studies in the context of quantum computing and distributed control.
Most of the progress in this category have been on the convergence conditions and optimizing the convergence rate of the algorithm, for quantum networks with undirected underlying topology.
This paper aims to address the extension of this problem over quantum networks with directed underlying graphs.
In doing so, the convergence to two different stable states namely, consensus and synchronous states have been studied.
Based on the intertwining relation between the eigenvalues, it is shown that for determining the convergence rate to the consensus state, all induced graphs should be considered while for the synchronous state only the underlying graph suffices.
Furthermore, it is illustrated that for the range of weights that the Aldous’s conjecture holds true, the convergence rate to both states are equal.
Using the Pareto region for convergence rates of the algorithm, the global and Pareto optimal points for several topologies have been provided.

%

\end{abstract}

\begin{IEEEkeywords}
Quantum Networks Synchronization, Distributed Consensus, Aldous' Conjecture, Optimal Convergence Rate
\end{IEEEkeywords}

\section{Introduction}

In the context of distributed control, distributed consensus  algorithms are employed as the building block for implementing other distributed algorithms  which rely on the individual decision of agents and the local communication among them \cite{Xiao04Boyd,SaberThesis2015}.
The extension of this class of algorithms to the quantum domain has been addressed in \cite{PetersenRef15} where four different generalizations of classical consensus states have been proposed.
In \cite{Petersen2015IEEETranAutControl,Petersen2015ACCPartI,Petersen2015ACCPartII} the necessary and sufficient conditions for asymptotic convergence of the quantum consensus algorithm is studied.
Optimizing the convergence rate of the algorithm to the consensus state has been addressed in \cite{SaberQConsensusContinuous,SaberQConsensusDiscrete}.
The majority of the analysis regarding the convergence rate of the algorithm has been focused on quantum networks with an undirected underlying graph.
%
%
%
In this paper, we aim to study the convergence rate of the distributed consensus algorithm over a network of qudit systems with general (i.e. either directed or undirected) underlying topologies.
The convergence rates to two different states of the network of quantum systems have been studied.
These states are consensus and synchronous states.
Consensus state is the symmetric state which is invariant to all permutations \cite{PetersenRef15}.
%
Synchronous state is the state where the reduced states of the quantum network are driven to a common trajectory \cite{Petersen2015ACCPartI,Petersen2015ACCPartII}.
Employing the intertwining relation \cite{SaberQConsensusContinuous} between the eigenvalues of the Laplacian matrices of the induced graphs, we have shown that the convergence rate to the consensus state is obtained from the spectrum of all induced graphs combined.
On the contrary, the convergence rate to the synchronous state is dictated by only the spectrum of the underlying graph of the network and therefore it is independent of the dimension of the Hilbert space $(d)$.
By establishing the relation between the convergence rates to consensus and synchronous states, we have shown that both convergence rates are equal and independent of $d$ if the Aldous' conjecture holds true for all induced graphs of the networks.
%
%
Furthermore, we have proved that the synchronous state is reachable for any permutation-invariant system Hamiltonian $(H_0)$, while for the algorithm to converge to the consensus state, either the system Hamiltonian should be zero (i.e. $H_0 = 0$) or the analysis should be limited to the interaction picture.
For different network topologies, by plotting the Pareto region of the convergence rates to the consensus and synchronous states, we have studied the Pareto optimal points and the global optimal points regarding both convergence rates.

The rest of the paper is organized as follows.
Preliminaries on graph theory are provided in section \ref{Preliminaries}.
Section \ref{sec:EvolutionQNet} explains the evolution of the quantum network.
In section \ref{Sec:Simulations}, optimization of the convergence rates of the algorithm have been addressed over different topologies and Section \ref{sec:Conclusion} concludes the paper.

\section{Preliminaries}
\label{Preliminaries}

In this section, we present the fundamental concepts 
on graph theory, Cayley and Schreier Coset Graphs.

\subsection{Graph Theory } 
\label{sec:GraphPreliminaries}

A directed graph (digraph) is defined as $\mathcal{G}= \{ \mathcal{V}, \mathcal{E} \}$ with $\mathcal{V} = \{1, \ldots, N\}$ as the set of vertices and $\mathcal{E}$ as the set of edges.
Each edge $\{i,j\} \in \mathcal{E}$ is an ordered pair of distinct vertices, denoting an directed edge from vertex $i$ to vertex $j$.
%
%
Throughout this paper, we consider directed simple graphs with no self-loops and at most one edge between any two different vertices.
A weighted graph is a graph where a weight is associated with every edge according to a proper map $W: \mathcal{E} \rightarrow \mathbb{R}$, such that if $\{i,j\} \in \mathcal{E}$, then $W(\{i,j\})= \boldsymbol{w}_{ij}$; otherwise $W(\{i,j\}) = 0$.
The edge structure of the weighted graph $\mathcal{G}$ is described through its weighted adjacency matrix $(A_{\mathcal{G}})$.
The weighted adjacency matrix $A_{\mathcal{G}}$ is a $N \times N$ matrix with $\{i,j\}$-th entry $(A_{\mathcal{G}}(i,j))$ defined as below
\begin{equation}
    \nonumber
    \begin{gathered}
        \nonumber  A_\mathcal{G}(i,j) =
	       \begin{cases}
                \boldsymbol{w}_{ij} \quad \text{if} \quad \{i,j\} \in \mathcal{E}, \\
                0 \quad \text{Otherwise.}
            \end{cases}
     \end{gathered}
\end{equation}
The indegree (outdegree) of a vertex $i$ is the sum of the weights on the edges heading in to (heading out of) vertex $i$.
%
%
A directed path (dipath) in a digraph is a sequence of vertices with directed edges pointing from each vertex to its successor in the sequence. A simple dipath is the one with no repeated vertices in the sequence.
%
%
A directed graph is called strongly connected if there is a dipath between any pair of vertices in the graph.
The weighted Laplacian matrix of graph $\mathcal{G}$ is defined as below,
\begin{equation}
    \nonumber
    \begin{gathered}
        \nonumber  L_{\mathcal{G}}(i,j) =
	       \begin{cases}
                D_{\mathcal{G}}(i,i) \quad \text{if} \quad i = j, \\
                -A_{\mathcal{G}}(i,j) \quad \text{if} \quad i \neq j,
            \end{cases}
     \end{gathered}
\end{equation}
where $D_{\mathcal{G}}(i,i)$ is the indegree of the $i$-th vertex.
This definition of the weighted Laplacian matrix can be expressed in matrix form as $L_{\mathcal{G}} = D_{\mathcal{G}} - A_{\mathcal{G}}$, where $D_{\mathcal{G}}$ and $A_{\mathcal{G}}$ are the indegree and the adjacency matrices of the graph $\mathcal{G}$.
The Laplacian matrix of a directed graph is not necessarily symmetric and
its eigenvalues can have imaginary parts.
Defining $\bf{1}$ and $\bf{0}$ as vectors of length $N$ with all elements equal to one and zero, respectively, for the Laplacian matrix we have $L_{\mathcal{G}} \times \bf{1} = \bf{0}$.
In directed graphs, the  eigenvalues of the the associated Laplacian can be arranged in non-decreasing order as below,
\begin{equation}
    \nonumber
    \begin{gathered}
        0 = \lambda_{1} (L_{\mathcal{G}})  \leq Re\left(\lambda_{2} (L_{\mathcal{G}}) \right) \leq \cdots \leq Re\left( \lambda_{N} (L_{\mathcal{G}}) \right).
     \end{gathered}
\end{equation}
$Re\left(\lambda_{i} (L_{\mathcal{G}}) \right)$ denotes the real part of the $i$-th eigenvalue of the weighted Laplacian matrix of the digraph.
The digraph $\mathcal{G}$ is said to be strongly connected \cite{Chung1997} iff $Re\left(\lambda_{2} (L_{\mathcal{G}}) \right) > 0$.

\subsection{Cayley Graph \& Schreier Coset Graph}

Let $\mathcal{H}$ be a group and let $\mathcal{S} \subseteq \mathcal{H}$.
The Cayley graph of $\mathcal{H}$ generated by $\mathcal{S}$ (referred to as the generator set $\mathcal{S}$), denoted by $Cay(\mathcal{H}, \mathcal{S})$, is the directed graph $\mathcal{G} = (\mathcal{V}, \mathcal{E})$ where $\mathcal{V} = \mathcal{H}$ and $\mathcal{E} = \{(x, xs) | x \in \mathcal{H}, s \in \mathcal{S}\}$.
If $\mathcal{S} = \mathcal{S}^{-1}$ (i.e., $\mathcal{S}$ is closed under inverse), then $Cay(\mathcal{H}, \mathcal{S})$ is an undirected graph.
If $\mathcal{H}$ acts transitively on a finite set $\Omega$, we may form a graph with vertex set $\mathcal{V} = \Omega$ and edge set $\mathcal{E} = \{ (\nu, \nu s) | \nu \in \Omega, s \in \mathcal{S} \}$. Similarly, if $\mathcal{Q}$ is a subgroup  in $\mathcal{H}$, we may form a graph whose vertices are the right cosets of $\mathcal{Q}$ , denoted $(\mathcal{H}:\mathcal{Q})$ and whose edges are of the form $\mathcal{E} = \{(\mathcal{Q}h, \mathcal{Q}hs) |\mathcal{Q}h\in (\mathcal{H}:\mathcal{Q}), s \in \mathcal{S}\}$.
These two graphs are the same when $\Omega$ is the coset space $(\mathcal{H}:\mathcal{Q})$, or when $\mathcal{Q}$ is the stabilizer of a point of $\Omega$ and  is called the Schreier coset graph $Sch(\mathcal{H}, \mathcal{S}, \mathcal{Q})$.

\section{Evolution of the Quantum Network}
\label{sec:EvolutionQNet}

Considering a quantum network as a composite (or multipartite) quantum system with $N$ qudits,
and assuming $\mathcal{H}$ as the d-dimensional Hilbert space over $\mathbb{C}$, then the state space of the quantum network is within the Hilbert space $\mathcal{H}^{\otimes N} = \mathcal{H} \otimes \ldots \otimes \mathcal{H}$.
The state of the quantum system is described by its density matrix $(\boldsymbol{\rho})$ (a positive Hermitian matrix with trace one).
The network is associated with an underlying graph $\mathcal{G}=\{ \mathcal{V}, \mathcal{E} \}$, where $\mathcal{V}=\{1,\ldots, N\}$ is the set of indices for the $N$ qudits, and each element in $\mathcal{E}$ is 
an ordered pair of two distinct qudits, denoted as $\{j,k\} \in \mathcal{E}$ with $j,k \in \mathcal{V}$.
Permutation group $S_{N}$ acts in a natural way on $\mathcal{V}$ by mapping $\mathcal{V}$ onto itself.
For each permutation $\pi \in S_{N}$
we associate unitary operator $U_{\pi}$ over $\mathcal{H}^{\otimes N}$, as below
\begin{equation}
    \nonumber
    \begin{gathered}
        U_{\pi} ( Q_{1} \otimes \cdots \otimes Q_{N} ) = Q_{\pi(1)} \otimes \cdots \otimes Q_{\pi(N)},
     \end{gathered}
\end{equation}
where $Q_{i}$ is an operator in $\mathcal{H}$ for all $i = 1, \ldots, N$.

Employing the quantum gossip interaction introduced in \cite{PetersenRef15}, the evolution of the quantum network can be described by the following master equation
%
%
\begin{equation}
    \label{eq:Lindblad2}
    \begin{gathered}
        \frac{d\boldsymbol{\rho}}{dt} = - \frac{i}{\hbar} [ H_0 , \boldsymbol{\rho} ]  +  \sum\nolimits_{ \pi \in \textit{B}} { w_{\pi} \left(  U_{\pi} \times \boldsymbol{\rho} \times U_{\pi}^{\dagger} - \boldsymbol{\rho}  \right) }
     \end{gathered}
\end{equation}
where $\textit{B}$ is a subset of the permutation group $S_N$,
$H_0$ is the (time-independent) system Hamiltonian, 
$i$ is the imaginary unit, $\hbar$ is the reduced Planck constant and 
where $w_{\pi}$ is a positive 
time-invariant
weight corresponding to the permutation $ \pi$.
These weights form the distribution of limited amount of weight up to $D$, among edges of the underlying graph, i.e.
\begin{equation}
    \label{eq:Lindbladconstraint}
    \begin{gathered}
        \sum\nolimits_{ \pi \in \textit{B} } {\textsf{l}_{\pi}w_{\pi}} \leq D,
     \end{gathered}
\end{equation}
where $\textsf{l}_{\pi}$ is the 
sum of the cycle length of cycles
appearing in $\pi$ except for trivial one-cycles.
%
%
%
%
The generator set $\textit{B}$ should be selected in a way that the underlying graph corresponding to $G_{\textit{B}}$ (the group generated by $\textit{B}$ which is a subset of $S_N$) is connected.


\subsection{Consensus \& Synchronous States}

In \cite{PetersenRef15}, four different consensus states generalized to the quantum domain are exploited.
Based on these schemes, three different possible consensus states can be defined which are reachable by quantum consensus algorithm.
%

\begin{definition} {Observable-Expectation Consensus}
\label{AvgStateDef}

The observable-expectation consensus $(\boldsymbol{\rho}^{*})$ is defined as the state where for any observable $\sigma$ the following holds,
%
\begin{equation}
    \label{eq:AverageState}
    \begin{gathered}
        tr(\boldsymbol{\rho}^{*}(t)\sigma_l)  =  tr(\boldsymbol{\rho}^{*}(t)\sigma_k)\quad \text{for}\quad k,l=1,\cdots,N,
    \end{gathered}
\end{equation}
where $\sigma_k= I\otimes I \otimes \cdots \overbrace{\sigma}^k \otimes \cdots I)$.
\end{definition}

\begin{definition} {Synchronous State}
\label{SynchDef}

The synchronous state $(\boldsymbol{\rho}^{*})$ is defined as the state 
where the following holds
%
\begin{equation}
    \label{eq:SynchronousState}
    \begin{gathered}
        \bar{\boldsymbol{\rho}}^{*}_{1}  =  \bar{\boldsymbol{\rho}}^{*}_{2}  =  \cdots  =  \bar{\boldsymbol{\rho}}^{*}_{N},
    \end{gathered}
\end{equation}
where $\bar{\boldsymbol{\rho}}_{k}$ is the reduced state of the subsystem $k$ for an overall system state $\boldsymbol{\rho}$ i.e. $\bar{\boldsymbol{\rho}}_{k}  =  tr_{(\otimes_{j\neq k}H_{j})} \left( \boldsymbol{\rho} \right)$.
In \cite{PetersenRef15}, equation $(\ref{eq:SynchronousState})$ is defined as the reduced state consensus.

\end{definition}
%
%
%
%
%
In \cite{PetersenRef15} it is shown that the observable-expectation consensus and the synchronous state are equivalent.

\begin{definition} {Symmetric State}
\label{SymStateDef}

The symmetric state $(\boldsymbol{\rho}^{*})$ is defined as the state where for each unitary permutation $U_{\pi}$, with $\pi \in G_B \subset S_N$
%
\begin{equation}
    \label{eq:AverageState}
    \begin{gathered}
        \boldsymbol{\rho}^{*}  =  \frac{1}{|G_{\textit{B}}|} \sum_{\pi \in \textit{B}} {U_{\pi} \boldsymbol{\rho}(t) U_{\pi}^{\dagger} }
    \end{gathered}
\end{equation}
\end{definition}

For the special case that the subset $B$ is able to generate the permutation set $S_N$ (i.e. $G_B = S_N$) the 
symmetric state is referred to as the consensus state. 

\subsection{Master Equation in Interaction Picture}

For the case of permutation-invariant $H_0 $, i.e.
 \begin{equation}
    \label{eq:Lindbladconstraint334}
    \begin{gathered}
       \left[H_{0},U_{\pi}\right] = 0 \quad \text{for every}\quad \pi \in S_N,
    \end{gathered}
\end{equation}
we can eliminate the first term in Lindblad equation by writing it in interaction picture, i.e. 
$\boldsymbol{\rho}(t)=e^{-iH_0t}\boldsymbol{\rho}_I(t)e^{iH_0t}$ and then substituting the result in Lindblad equation which in turn results in the following,
%
%
\begin{equation}
    \label{eq:Lindblad3}
    \begin{gathered}
        \frac{d\boldsymbol{\rho}_I}{dt} = \sum_{ \pi \in \textit{B}} { w_{\pi} \left(  U_{\pi} \times \boldsymbol{\rho}_I \times U_{\pi}^{\dagger} - \boldsymbol{\rho}_I  \right) }.
     \end{gathered}
\end{equation}
In \cite{PetersenRef15,Petersen2015IEEETranAutControl}, it is shown that equation (\ref{eq:Lindblad3}) can asymptotically reach the symmetric state given below,
\begin{equation}
    \label{eq:QCMEFinalConsensus}
    \begin{gathered}
        \boldsymbol{\rho}^{*}_I(t)  =  \frac{1}{|G_B|} \sum_{\pi \in B} {U_{\pi} \boldsymbol{\rho}_I(t) U_{\pi}^{\dagger} }.
     \end{gathered}
\end{equation}
%
Substituting (8) in (7) and using the fact that $\boldsymbol{\rho}^{*}_I(t)$ is permutation-invariant i.e.
\begin{equation}
    \label{eq:Formula382}
    \begin{gathered}
      U_{\pi} \boldsymbol{\rho}_I^{*}(t)U_{\pi}^{\dagger}=\boldsymbol{\rho}_I^{*}(t)  \quad \text{for every}\quad \pi \in \textit{B},
    \end{gathered}
\end{equation}
it can be concluded that $\frac{d\boldsymbol{\rho}_I^{*}(t)}{dt}=0$ and $\boldsymbol{\rho}_I^{*}(t) = \boldsymbol{\rho}_I^{*}(0)$.
\begin{theorem}
\label{theorem1}
For permutation-invariant $H_0$, the equation (\ref{eq:Lindblad2}) can reach the symmetric state  (\ref{eq:AverageState}).
\end{theorem}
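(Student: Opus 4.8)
The plan is to reduce the statement to the interaction-picture convergence result already recalled in (\ref{eq:Lindblad3})--(\ref{eq:Formula382}) and then to transport that result back to the Schr\"odinger picture; the only tool needed is the hypothesis $[H_{0},U_{\pi}]=0$ for every $\pi\in S_{N}$.

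First I would set $\boldsymbol{\rho}(t)=e^{-iH_{0}t}\boldsymbol{\rho}_{I}(t)e^{iH_{0}t}$, differentiate, and substitute into (\ref{eq:Lindblad2}). By the product rule the Hamiltonian-like term generated is exactly the commutator term on the right-hand side of (\ref{eq:Lindblad2}), and it cancels. In the remaining dissipative term one conjugates each summand by $e^{iH_{0}t}(\cdot)e^{-iH_{0}t}$; since $[H_{0},U_{\pi}]=0$ gives $e^{iH_{0}t}U_{\pi}e^{-iH_{0}t}=U_{\pi}$, every $U_{\pi}\boldsymbol{\rho}U_{\pi}^{\dagger}$ is carried to $U_{\pi}\boldsymbol{\rho}_{I}U_{\pi}^{\dagger}$ unchanged, so $\boldsymbol{\rho}_{I}$ satisfies (\ref{eq:Lindblad3}) verbatim. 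By the result of \cite{PetersenRef15,Petersen2015IEEETranAutControl} quoted above, the solution of (\ref{eq:Lindblad3}) converges asymptotically to the stationary state $\boldsymbol{\rho}^{*}_{I}=\tfrac{1}{|G_{B}|}\sum_{\pi\in B}U_{\pi}\boldsymbol{\rho}_{I}(0)U_{\pi}^{\dagger}$.

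Next I would identify the conjugated limit $e^{-iH_{0}t}\boldsymbol{\rho}^{*}_{I}e^{iH_{0}t}$ with the symmetric state of Definition \ref{SymStateDef} attached to $\boldsymbol{\rho}(t)$. Because $[H_{0},U_{\pi}]=0$ also implies $[e^{-iH_{0}t},U_{\pi}]=0$, one may slide $e^{\mp iH_{0}t}$ past the symmetrization sum, obtaining $\tfrac{1}{|G_{B}|}\sum_{\pi\in B}U_{\pi}\boldsymbol{\rho}(t)U_{\pi}^{\dagger}=e^{-iH_{0}t}\big(\tfrac{1}{|G_{B}|}\sum_{\pi\in B}U_{\pi}\boldsymbol{\rho}_{I}(t)U_{\pi}^{\dagger}\big)e^{iH_{0}t}$. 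Hence the Schr\"odinger-picture symmetric state of $\boldsymbol{\rho}(t)$ is precisely $\boldsymbol{\rho}^{*}(t)=e^{-iH_{0}t}\boldsymbol{\rho}^{*}_{I}e^{iH_{0}t}$, which is in general a time-dependent trajectory rather than a fixed point, as expected when $H_{0}\neq 0$.

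Finally I would close the argument using the fact that the Schr\"odinger and interaction pictures are related by the unitary conjugation $e^{-iH_{0}t}(\cdot)e^{iH_{0}t}$, an isometry for any unitarily invariant norm: therefore $\|\boldsymbol{\rho}(t)-\boldsymbol{\rho}^{*}(t)\|=\|\boldsymbol{\rho}_{I}(t)-\boldsymbol{\rho}^{*}_{I}\|\to 0$, so (\ref{eq:Lindblad2}) asymptotically reaches the symmetric state (\ref{eq:AverageState}). The crucial observation, and the only place where care is needed, is the commutation of $e^{-iH_{0}t}$ with the symmetrization operator $\tfrac{1}{|G_{B}|}\sum_{\pi\in B}U_{\pi}(\cdot)U_{\pi}^{\dagger}$; it is exactly here that the permutation-invariance assumption (\ref{eq:Lindbladconstraint334}) is indispensable, since without it neither the cancellation of the Hamiltonian term in the interaction picture nor the identification of the two symmetric states would go through.
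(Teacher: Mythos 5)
Your proposal is correct and follows essentially the same route as the paper: pass to the interaction picture using $[H_{0},U_{\pi}]=0$, invoke the cited convergence of (\ref{eq:Lindblad3}) to (\ref{eq:QCMEFinalConsensus}), and slide $e^{\mp iH_{0}t}$ through the symmetrization sum to identify $\boldsymbol{\rho}^{*}(t)=e^{-iH_{0}t}\boldsymbol{\rho}^{*}_{I}(0)e^{iH_{0}t}$, exactly as in the paper's chain of equalities (\ref{eq:Formula396}). Your closing step via the unitary invariance of the norm merely makes explicit the convergence claim that the paper leaves implicit.
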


\begin{proof}
%
The following can be written for the 
symmetric state (\ref{eq:AverageState}) 
\begin{equation}
    \label{eq:Formula396}
    \begin{gathered}
        \boldsymbol{\rho}^{*}(t)  =  \frac{1}{|G_{\textit{B}}|} \sum\limits_{\pi \in \textit{B}} {U_{\pi} \boldsymbol{\rho}(t) U_{\pi}^{\dagger} }=
        \frac{1}{|G_{\textit{B}}|} \sum\limits_{\pi \in \textit{B}} {U_{\pi} e^{-iH_0t}\boldsymbol{\rho}_I(t)e^{iH_0t} U_{\pi}^{\dagger}  } 
        = e^{-iH_0t}\boldsymbol{\rho}^*_I(t)e^{iH_0t}=e^{-iH_0t}\boldsymbol{\rho}^*_I(0)e^{iH_0t}. 
     \end{gathered}
\end{equation}
Therefore,  the equation (\ref{eq:Lindblad2}) will reach the symmetric state  (\ref{eq:AverageState}),
as long as the equation (\ref{eq:Lindblad3})   reaches the symmetric state (\ref{eq:QCMEFinalConsensus}).
%

\end{proof}

\begin{theorem}
\label{theorem2}
Equation (\ref{eq:Lindblad2}) can reach the synchronous state if the underlying graph corresponding
to $G_B$ is strongly connected and $H_0$ is permutation-invariant.
\end{theorem}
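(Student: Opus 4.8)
The plan is to lift the dynamics into the interaction picture, use the already established convergence to the symmetric state there, translate that into equality of the single-site marginals via transitivity of $G_B$, and finally transport the conclusion back to the Schr\"odinger picture using the commutation of $H_0$ with the permutation unitaries.

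First I would use permutation-invariance of $H_0$, i.e. (\ref{eq:Lindbladconstraint334}), to write $\boldsymbol{\rho}(t) = e^{-iH_0 t}\boldsymbol{\rho}_I(t)e^{iH_0 t}$ with $\boldsymbol{\rho}_I$ solving the interaction-picture equation (\ref{eq:Lindblad3}), exactly as in the proof of Theorem \ref{theorem1}. By the convergence result of \cite{PetersenRef15,Petersen2015IEEETranAutControl} recalled in (\ref{eq:QCMEFinalConsensus}), the solution of (\ref{eq:Lindblad3}) satisfies $\boldsymbol{\rho}_I(t)\to\boldsymbol{\rho}^{*}_I$ as $t\to\infty$, where $\boldsymbol{\rho}^{*}_I$ obeys $U_{\pi}\boldsymbol{\rho}^{*}_I U_{\pi}^{\dagger}=\boldsymbol{\rho}^{*}_I$ for every generator $\pi\in B$, and hence (the stabilizer being a subgroup containing $B$) for every $\pi\in G_B$.

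Next I would establish the graph-theoretic bridge: strong connectivity of the underlying graph associated with $G_B$ is equivalent to $G_B$ acting transitively on $\mathcal{V}=\{1,\ldots,N\}$ — a directed path from $j$ to $k$ is realized edge by edge as a product of generators in $B$, giving an element of $G_B$ sending $j$ to $k$, and conversely the $G_B$-orbits on $\mathcal{V}$ are precisely the connected components. Granting transitivity, fix any $j,k$ and pick $\pi\in G_B$ with $\pi(j)=k$; since $U_{\pi}$ only relabels tensor factors and $U_{\pi}\boldsymbol{\rho}^{*}_I U_{\pi}^{\dagger}=\boldsymbol{\rho}^{*}_I$, taking the partial trace over all subsystems but $k$ gives $\bar{\boldsymbol{\rho}}^{*}_{I,k}=tr_{(\otimes_{m\neq k}H_m)}\!\left(U_{\pi}\boldsymbol{\rho}^{*}_I U_{\pi}^{\dagger}\right)=tr_{(\otimes_{m\neq j}H_m)}\!\left(\boldsymbol{\rho}^{*}_I\right)=\bar{\boldsymbol{\rho}}^{*}_{I,j}$, so $\boldsymbol{\rho}^{*}_I$ satisfies (\ref{eq:SynchronousState}). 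Then I would transport this back by setting $\boldsymbol{\rho}^{*}(t)=e^{-iH_0 t}\boldsymbol{\rho}^{*}_I e^{iH_0 t}$: because $[H_0,U_{\pi}]=0$ the propagator commutes with every $U_{\pi}$, so $U_{\pi}\boldsymbol{\rho}^{*}(t)U_{\pi}^{\dagger}=\boldsymbol{\rho}^{*}(t)$ for $\pi\in G_B$, and the same partial-trace argument yields $\bar{\boldsymbol{\rho}}^{*}_{1}(t)=\cdots=\bar{\boldsymbol{\rho}}^{*}_{N}(t)$ for all $t$; finally $\boldsymbol{\rho}(t)-\boldsymbol{\rho}^{*}(t)=e^{-iH_0 t}\!\left(\boldsymbol{\rho}_I(t)-\boldsymbol{\rho}^{*}_I\right)e^{iH_0 t}$ and conjugation by a unitary is norm-preserving, so $\|\boldsymbol{\rho}(t)-\boldsymbol{\rho}^{*}(t)\|=\|\boldsymbol{\rho}_I(t)-\boldsymbol{\rho}^{*}_I\|\to 0$, i.e. the reduced states of $\boldsymbol{\rho}(t)$ are asymptotically driven to the common trajectory, which is the synchronous state of Definition \ref{SynchDef}.

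The step I expect to be the main obstacle is the graph-theoretic bridge in the second paragraph — stating and justifying carefully that ``the underlying graph of $G_B$ is strongly connected'' coincides with ``$G_B$ is transitive on $\mathcal{V}$'' — together with the index bookkeeping in the partial-trace identity under $U_{\pi}$. Everything else is an assembly of Theorem \ref{theorem1}, the cited asymptotic convergence, and the elementary fact that a $G_B$-invariant state has identical single-site marginals; the only conceptual point worth flagging is that in the Schr\"odinger picture the algorithm does not reach a fixed point but the time-dependent synchronous state $\boldsymbol{\rho}^{*}(t)$, consistently with the ``common trajectory'' phrasing of Definition \ref{SynchDef}.
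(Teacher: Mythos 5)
Your proof is correct and follows essentially the same route as the paper's: both pass to the interaction picture via Theorem \ref{theorem1}, invoke the cited convergence to the permutation-invariant symmetric state, and then use a permutation in $G_B$ sending site $l$ to site $k$ together with that invariance to equate the reduced states (the paper does this at the level of matrix elements $tr\bigl(\boldsymbol{\rho}^{*}(t)\, I\otimes\cdots\otimes \mid j\rangle\langle i\mid \otimes\cdots\otimes I\bigr)$, you via the partial trace directly). The only substantive difference is that you make explicit the transitivity of $G_B$ on $\mathcal{V}$ --- which the paper leaves implicit when it simply asserts the existence of the required permutation --- and the unitary-conjugation transport of the convergence back to the Schr\"odinger picture; these are welcome clarifications rather than a different argument.
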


\begin{proof}
Based on (\ref{eq:Formula396}) we have
\begin{equation}
    \label{eq:Formula404}
    \begin{gathered}
      <i\mid   \boldsymbol{\rho}_l^{*}(t)\mid j>  = tr(\boldsymbol{\rho}^{*}(t)I\times I\times \cdots \overbrace{\mid j> <i\mid }^l\times \cdots I)
      = tr(\boldsymbol{\rho}^{*}(t)U_{\pi}(I\times I\times \cdots \overbrace{\mid j> <i\mid }^k\times \cdots I)U_{\pi}^{\dagger})
      \\
      = tr(U_{\pi}^{\dagger}\boldsymbol{\rho}^{*}(t)U_{\pi}I\times I\times \cdots \overbrace{\mid j> <i\mid }^k\times \cdots I)
      = tr(\boldsymbol{\rho}^{*}(t)I\times I\times \cdots \overbrace{\mid j> <i\mid }^k\times \cdots I)=<i\mid   \boldsymbol{\rho}_k^{*}(t)\mid j>,
     \end{gathered}
\end{equation}
where $U_{\pi}  $ is the unitary representation of a permutation that maps $0\cdots0\overbrace{1}^l0\cdots 0$  to  $0\cdots0\overbrace{1}^k0\cdots 0$.
From equation (\ref{eq:Formula404}), it can be concluded that the $\boldsymbol{\rho}_{l}^{*}(t) = \boldsymbol{\rho}_{k}^{*}(t)$ i.e. the reduced states of the $l$-th and $k$-th qudits in the equilibrium are the same, i.e synchronous state is obtained.

\end{proof}

Also equation(\ref{eq:Formula404}) implies the following,
\begin{equation}
    \label{eq:Formula413}
    \begin{gathered}
        tr(\boldsymbol{\rho}^{*}(t)\sigma_l)  =  tr(\boldsymbol{\rho}^{*}(t)\sigma_k)\quad \text{for}\quad k,l=1,\cdots,N,
    \end{gathered}
\end{equation}
where $\sigma_k= I\otimes I\otimes \cdots \overbrace{\sigma}^k\otimes \cdots I)$.
In \cite{PetersenRef15}, equation $(\ref{eq:Formula413})$ is defined as the $\sigma$-Expectation Consensus.

\begin{remark}
Based on Theorem \ref{theorem2} and Definition \ref{SymStateDef}, we can state that for permutation-invariant $H_0$ and a strongly connected underlying graph $(B)$ which can generate the permutation group (i.e. $G_B = S_N$),
the consensus state is reachable for $\boldsymbol{\rho}_{I}(t)$,
and on the other hand the consensus state is reachable for $\boldsymbol{\rho}(t)$ if $H_0 = 0$. 
\end{remark}

Thus to have both consensus and synchronous states as feasible states, for the rest of the analysis presented in this paper we have assumed that $H_0 = 0$ and $G_B = S_N$.



\subsection{Equivalent Classical Continuous-Time Consensus Algorithm}


The density matrix $(\boldsymbol{\rho})$ can be expanded in terms of the generalized Gell-Mann matrices 
{\cite[Appendix A]{SaberQConsensusContinuous}} as below,
%
\begin{equation}
    \label{eq:DecompositionDensityGeneral}
    \begin{gathered}
        \boldsymbol{\rho} = \frac{1}{d^N} \sum_{ \mu_{1}, \mu_{2}, \ldots, \mu_{N} = 0 }^{ d^{2} - 1 }  {  \rho_{\mu_{1}, \mu_{2}, \ldots, \mu_{N} } \cdot \boldsymbol{\lambda}_{\mu_{1}} \otimes  \lambda_{\mu_{2}} \otimes \cdots \otimes \lambda_{\mu_{N}}  },
     \end{gathered}
\end{equation}
where $N$ is the number of particles in the network and $\otimes$ denotes the Cartesian product and $\lambda$ matrices are the generalized Gell-Mann matrices.
Note that due to Hermity of density matrix, its coefficients of expansion $\rho_{ \mu_{1}, \mu_{2}, \ldots, \mu_{N} }$ are real numbers and because of unit trace of $\boldsymbol{\rho}$ we have $\rho_{0,0,\ldots,0} = 1$.

Substituting the density matrix $\boldsymbol{\rho}$ from (\ref{eq:DecompositionDensityGeneral})
in Lindblad master equation (\ref{eq:Lindblad3}) and considering the independence of the matrices $\lambda_{\mu_{1}} \otimes \lambda_{\mu_{2}} \otimes \cdots \lambda_{\mu_{N}} $ we can conclude the following for Lindblad master equation (\ref{eq:Lindblad3}),
\begin{equation}
    \label{eq:DensityEquation1}
    \begin{gathered}
        \frac{d}{dt} \rho_{\mu_{1}, \cdots, 
        \mu_{N}}  =
        \sum_{\pi \in \textit{B} }   {  w_{\pi} \left( \rho_{\pi(\mu_{1}),\cdots,\pi(\mu_{N}) }
        -
        \rho_{\pi(\mu_{1},\cdots,\mu_{N}) }  \right)  } 
    \end{gathered}
\end{equation}
for all $\mu_{1},\mu_{2},\cdots,\mu_{N}=0,\cdots,d^{2}-1$,
with the constraint (\ref{eq:Lindbladconstraint}) on the edge weights.
The tensor component of the quantum consensus state (\ref{eq:QCMEFinalConsensus}) can be written as below
\begin{equation}
    \label{eq:QuantumConsensusState872}
    \begin{gathered}
        \rho_{\mu_1, \mu_2, \ldots, \mu_N}^{*}  =  \frac{1}{N!} \sum\nolimits_{\pi \in S_N} {\rho_{\pi(\mu_1), \pi(\mu_2), \ldots, \pi(\mu_N)} (0)}
    \end{gathered}
\end{equation}
and for the strongly connected underlying graph, the QCME reaches quantum consensus, componentwise as below
\begin{equation}
    \nonumber
    \begin{gathered}
        \lim_{t \rightarrow \infty} {    \rho_{\mu_1, \mu_2, \ldots, \mu_N}  (t)    }  =  \rho_{\mu_1, \mu_2, \ldots, \mu_N}^{*},
    \end{gathered}
\end{equation}
Comparing the set of equations in (\ref{eq:DensityEquation1}) with those of the classical Continuous-Time Consensus (CTC) problem \cite{SaberQConsensusContinuous} we can
see that the Quantum Consensus Master Equation (\ref{eq:Lindblad3}) is transformed into a classical CTC problem  
with $d^{2N} -1$ tensor component $\rho_{\mu_{1}, \cdots, \mu_{N}}$ as the agents' states.
Defining $\boldsymbol{X}_Q$ as a column vector of length $d^{2N}$ with components $\rho_{\mu_1, \ldots, \mu_N}$, the state update equation of the classical CTC can be written as below,
\begin{equation}
    \label{eq:QuantumStateUpdate}
    \begin{gathered}
        \frac{d\boldsymbol{X}_Q}{dt} = - \boldsymbol{L}_Q \boldsymbol{X}_Q,
    \end{gathered}
\end{equation}
with the constraint (\ref{eq:Lindbladconstraint}) on the edge weights.
$\boldsymbol{L}_Q$ is the corresponding Laplacian matrix as below,
\begin{equation}
    \label{eq:QuantumLaplacian}
    \begin{gathered}
        \boldsymbol{L}_Q  =  \sum\nolimits_{\pi \in \textit{B}} {  \boldsymbol{w}_{\pi} ( I_{d^{2N}}  -  U_{\pi} )  },
    \end{gathered}
\end{equation}
where $U_{\pi}$ is the swapping operator given in 
{\cite[Appendix A]{SaberQConsensusContinuous}}
 provided that $d$ is replaced with $d^2$ which in turn results in Gell-Mann matrices of size  $d^2 \times d^2$.

\subsection{Convergence Rate to Consensus State}

It is obvious that the convergence rate of the obtained classical CTC problem (\ref{eq:QuantumStateUpdate}) is dictated by $Re(\lambda_2 ( \boldsymbol{L}_Q ))$, where $\lambda_2 ( \boldsymbol{L}_Q )$ is the eigenvalue of $\boldsymbol{L}_{Q}$ with second smallest real value.
%
Thus, the corresponding optimization problem can be written as below,
\begin{equation}
    \label{eq:FCTQCInitial}
    \begin{aligned}
        \max\limits_{\boldsymbol{w}} \quad &Re\left( \lambda_2(\boldsymbol{L}_{Q}) \right) \\ 
        s.t. \quad &\sum\nolimits_{\pi \in \textit{B}} {l_{\pi} w_{\pi}} \leq D.
    \end{aligned}
\end{equation}
This problem is known as the Fastest Continuous Time Quantum Consensus (FCTQC) problem \cite{Petersen2015IEEETranAutControl,SaberQConsensusContinuous,Petersen2015ACCPartI}.
In \cite{SaberQConsensusContinuous,Petersen2015IEEETranAutControl} it is shown that the underlying graph of the obtained classical CTC problem (\ref{eq:QuantumStateUpdate}) is a cluster of connected components.
Similar result can be deduced for directed underlying graphs and it can be shown that each strongly connected graph component corresponds to a given partition of $N$ into $K$ integers, namely $N = n_{1} + n_{2} + \cdots + n_{K}$, where $K \leq d^2$ and $n_{j}$ for $j=1,\ldots,K$  is the number of indices in $\rho_{\mu_1, \mu_2, \ldots, \mu_N}$ with equal values. 
For a given partition and its associated Young Tabloids, more than one connected component can be obtained. 
Therefore for each partition we consider only one of them, and we refer to this graph as the induced graph.
These induced graphs are the same as those noted in \cite{Petersen2015IEEETranAutControl}.
Each Young tabloid $t_{n}(r_{1}, r_{2}, \cdots, r_{N})$ is equivalent to an agent in the 
induced graph of the CTC problem
and its corresponding coefficient ($\rho_{\mu_{r_{1}}, \mu_{r_{2}}, \cdots, \mu_{r_{N}}}$) is equivalent to the state of that agent.
For more details on Young Tabloids and their association with partitions of an integer, we refer the reader to \cite{SaberQConsensusContinuous}.
%
%
%
%
%
%
Based on the fact that the underlying graph of the CTC problem (\ref{eq:DensityEquation1}) is a cluster of connected components, in \cite{SaberQConsensusContinuous} it is shown that the Laplacian matrix $\boldsymbol{L}_Q$ is a block diagonal matrix where each block corresponds to one of the connected components, with  state vector  $\boldsymbol{X}_{n}$.
The state update equation (\ref{eq:QuantumStateUpdate}) for the state vector $\boldsymbol{X}_{n}$ is as below,
\begin{equation}
    \label{eq:StateUpdateEquationXn}
    \begin{gathered}
        \frac{d\boldsymbol{X}_n}{dt} = -\boldsymbol{L}_n \boldsymbol{X}_n,
    \end{gathered}
\end{equation}
with $\boldsymbol{L}_n$ as the Laplacian matrix which is one of the blocks in $\boldsymbol{L}_Q$.
Therefore, the convergence rate of (\ref{eq:Lindblad3}) to the fixed point (\ref{eq:QCMEFinalConsensus}) is equivalent to the convergence rate of the obtained classical CTC problem,
i.e. $\left( Re(\lambda_2(\boldsymbol{L}_Q)) \right)$, which in turn is determined by the real part of
the spectrum of the induced graphs' Laplacian matrices.
In other words we have $Re \left( \lambda_2(\boldsymbol{L}_Q) \right) = \min\limits_{n} \left( Re \left( \lambda_2(\boldsymbol{L}_n) \right) \right) $.

Let $n$ and $n^{'}$ be two given partitions of $N$, then $n$ dominates $n^{'}$ if we have
\begin{equation}
    \nonumber
    \begin{aligned}
        n \unrhd n^{'} \quad \text{if and only if} \quad \sum_{j=1}^{i} {n_{j}} \geq \sum_{j=1}^{i} {n_{j}^{'}}  \quad \text{for all} \quad i \geq 1.
    \end{aligned}
\end{equation}
In \cite{SaberQConsensusContinuous}, it is shown that the spectrum of the induced graph corresponding to the dominant partition is included in that of the less dominant partition.
This is known as the intertwining relation \cite{SaberQConsensusContinuous}.
Also, in \cite{SaberQConsensusContinuous}, it is shown that if the underlying graph of the quantum network is a connected and undirected graph then the second smallest eigenvalues of all induced graphs are equal.
This is known as the generalization of Aldous' conjecture.
Using this result, in \cite{SaberQConsensusContinuous}, the problem of optimizing the convergence rate of quantum consensus algorithm is reduced to the problem of maximizing the second smallest eigenvalue of the underlying graph of the quantum network.
Thus, the convergence rate is independent of $d$ (the dimension of the Hilbert space).
In general, Aldous' conjecture does not hold true for the case of directed underlying graphs.
Therefore for optimizing $Re \left( \lambda_2(\boldsymbol{L}_Q) \right)$, all induced graphs should be considered.
Hence, the convergence rate will depend on $d$ (the dimension of the Hilbert space).
In next section, for several examples, we have shown that the Aldous' conjecture is partially true.

\subsection{Convergence Rate to the Synchronous State}

Expanding $\rho^{*}$ in terms of Gell-Mann matrices (similar to (\ref{eq:DecompositionDensityGeneral})) and substituting $\lambda_{\mu}$ as $\sigma$ in (\ref{eq:Formula413}) 
we have
%
\begin{equation}
    \nonumber
    \begin{gathered}
        \rho^{*}_{0, \ldots, 0, \underbrace{\mu}_{l}, 0, \ldots 0} = \rho^{*}_{0, \ldots, 0, \underbrace{\mu}_{k}, 0, \ldots 0}.
    \end{gathered}
\end{equation}
Since (\ref{eq:Formula413}) can be concluded if the equation above holds true therefore (\ref{eq:Formula413}) is the necessary and sufficient condition for reaching the synchronous state. 
In other words, reaching consensus in the underlying graph (not necessarily in the induced graphs) is the necessary and sufficient condition for reaching the synchronous state.

Therefore, it can be concluded that for analyzing the convergence rate to the synchronous state, it suffices to study the convergence rate of the classical consensus algorithm over only the underlying digraph of the network.
As a result the convergence rate to the synchronous state wold be independent of $d$ (the dimension of Hilbert space).
For optimizing the convergence rate to the synchronous state, the corresponding optimization problem can be written as below,
\begin{equation}
    \label{eq:SynchronousOptimizationState}
    \begin{aligned}
        \max\limits_{\boldsymbol{w}} \quad &Re\left( \lambda_2(\boldsymbol{L}_{U}) \right)  \\  
        s.t. \quad &\sum\nolimits_{\pi \in \textit{B}} {l_{\pi} w_{\pi}} \leq D.
    \end{aligned}
\end{equation}
where $\boldsymbol{L}_{U}$ is the Laplacian matrix of the underlying graph.
Note that as long as the Aldous’ conjecture holds true, the convergence rate to both the synchronous and the consensus states are the same, otherwise the convergence rate to the synchronous state is faster than that of the consensus state.


For the rest of this paper, we will study different underlying digraphs and we will investigate if the Aldous' conjecture holds true and also we will study the convergence rate to both the synchronous state and the consensus states.


\section{Optimizing the Convergence Rates to the Consensus and Synchronous States}
\label{Sec:Simulations}

In this section, we optimize the convergence rates of the distributed quantum consensus algorithm to the consensus and the synchronous states, over different topologies with three and four qudits.

Topologies with $3$ qudits have two induced graphs of sizes $3$ and $6$, where the smaller induced graph is identical to the underlying graph of the topology.
we denote the Laplacian matrices of the underlying graph and the induced graphs by $\boldsymbol{L}_U$, $\boldsymbol{L}_3$ and $\boldsymbol{L}_6$, respectively.
Using the intertwining relation \cite{SaberQConsensusContinuous} between laplacian matrices of the induced graphs, we can state that all eigenvalues of $\boldsymbol{L}_3$ are amongst those of $\boldsymbol{L}_6$.
Therefore, the convergence rates to the synchronous and consensus states are dictated by the second smallest eigenvalues of $\boldsymbol{L}_{U}$ and $\boldsymbol{L}_{6}$, respectively.

The first topology $\left( G_{1}(3) \right)$ that we consider is a digraph with three qudits with one cycle of length three and one transposition.
The underlying and the induced $(\boldsymbol{L}_6)$ graphs of this topology are depicted in figure \ref{fig:G3D1Graph}.
The weight on edges of the cycle and the transposition are denoted by $w_{123}$ and $w_{12}$, respectively.
%
%
\begin{figure}
  \centering
     \includegraphics[width=110mm]{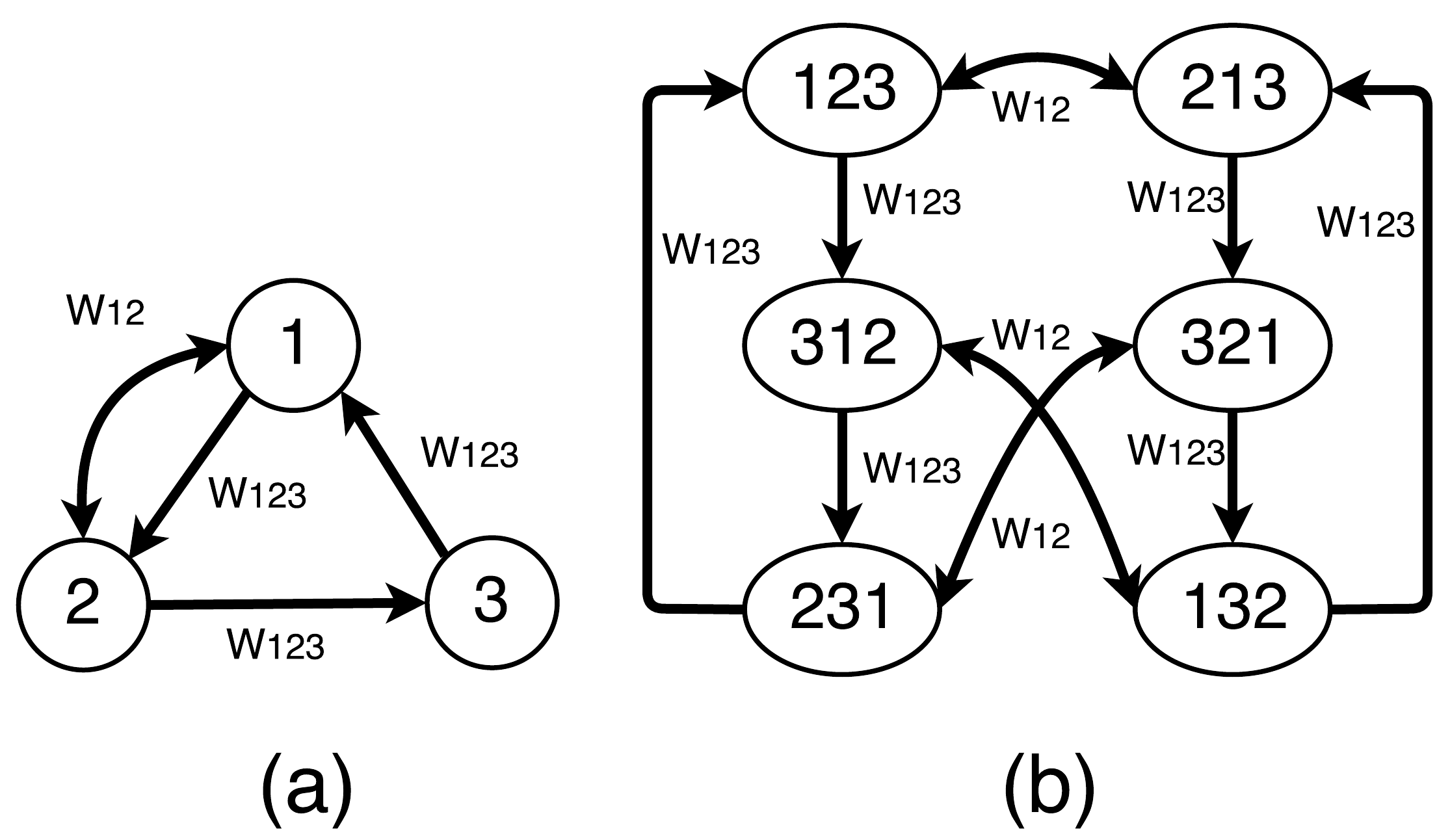}
  \caption{(a) Underlying graph of $G_{1}(3)$ and (b) the resultant induced graph.}
  \label{fig:G3D1Graph}
\end{figure}
%
The Laplacian matrices for the underlying $(\boldsymbol{L}_U)$ and the induced graphs $(\boldsymbol{L}_{6})$ of this topology are as below,
%
\begin{equation}
\begin{gathered}
    \nonumber
    \boldsymbol{L}_U =
    \small{\left[ \begin{array}{ccc}
    { w_{123} + w_{12} }        &{ -w_{12} }                &{ -w_{123} }       \\
    { -w_{123} - w_{12} }       &{ w_{123} + w_{12} }       &{ 0 }              \\
    { 0 }                       &{ -w_{123} }               &{ w_{123} }    \end{array} \right]}, 
\end{gathered}
\end{equation}
%
%
\begin{equation}
\begin{aligned}
    \nonumber
    \boldsymbol{L}_{6} =
    \left[ \begin{array}{cccccc}
    { d_1 }   &{ 0 }   &{ -w_{123} }     &{ -w_{12} }      &{ 0 }      &{ 0 }        \\
    { -w_{123} }   &{ d_1 }   &{ 0 }     &{ 0 }      &{ 0 }      &{ -w_{12} }        \\
    { 0 }   &{ -w_{123} }   &{ d_1 }     &{ 0 }      &{ -w_{12} }      &{ 0 }        \\
    { -w_{12} }   &{ 0 }   &{ 0 }     &{ d_1 }      &{ 0 }      &{ -w_{123} }        \\
    { 0 }   &{ 0 }   &{ -w_{12} }     &{ -w_{123} }      &{ d_1 }      &{ 0 }        \\
    { 0 }   &{ -w_{12} }   &{ 0 }     &{ 0 }      &{ -w_{123} }      &{ d_1 }
    \end{array} \right],
\end{aligned}
\end{equation}
where $d_1 = w_{12} + w_{123}$.
%
%
%
%
The nontrivial eigenvalues of $\boldsymbol{L}_{U}$ are 
$A_1 \pm \sqrt{ B_1 }/2$, where $A_1 = \frac{3}{2}w_{123} + w_{12}$ and $B_1 = -3w_{123}^2 + 4w_{12}^2$.
Thus the second smallest eigenvalue of $\boldsymbol{L}_{U}$ is 
$A_1 - \frac{1}{2} \sqrt{ B_1 }$.
In the case of $\boldsymbol{L}_{6}$ the nontrivial eigenvalues are 
$A_1 - \frac{1}{2} \sqrt{ B_1 }$ and $2w_{12}$
and the second smallest eigenvalue of $\boldsymbol{L}_{6}$ is 
$\min \left( 2w_{12} , A_1 - \frac{1}{2} \sqrt{ B_1 } \right)$.
Hence the convergence rate to the synchronous state and the consensus state is dictated by the following,
%
\begin{equation}
    \begin{gathered}
        \label{eq:G3LambdaSynch}
        \lambda_{Synch} = Re \left( A_1 - \sqrt{ B_1 } / 2 \right),
    \end{gathered}
\end{equation}
%
%
\begin{equation}
    \begin{gathered}
        \label{eq:G3LambdaConsensus}
        \lambda_{Cons} = \min \left( 2w_{12} , Re \left( A_1 - \sqrt{ B_1 } / 2 \right) \right).
    \end{gathered}
\end{equation}
In figure \ref{fig:G3D1Pareto}, we have plotted the Parieto region \cite{BoydBook2004} for $\lambda_{Synch}$ and $\lambda_{Cons}$, with the constraint $3w_{123} + 2w_{12} \leq 1$.
From figure \ref{fig:G3D1Pareto}, it is obvious that there is not any global optimal point for both $\lambda_{Cons}$ and $\lambda_{Synch}$.
The region is bounded by two lines, namely $\lambda_{Synch} = 0.5$ and $\lambda_{Cons} = \lambda_{Synch}$ and a convex curve, where all three of them are obtained for the constraint $3w_{123} + 2w_{12} = 1$.
The line $\lambda_{Synch} = 0.5$ is obtained for $0 < w_{123} \leq \frac{1}{5} $ where 
$B_1 \leq 0$
and thus 
$Re \left( A_1 - \sqrt{ B_1 } / 2 \right)  = \frac{3}{2}w_{123} + w_{12} = \frac{1}{2}$.
The line $\lambda_{Cons} = \lambda_{Synch}$ is obtained for $\frac{1}{3+\sqrt{3}} \leq w_{123} \leq \frac{1}{3}$ which results in 
$2w_{12} > Re \left( A_1 - \sqrt{ B_1 } / 2 \right) $.
The convex curve between two lines is obtained for $\frac{1}{5} \leq w_{123} \leq \frac{1}{3+\sqrt{3}}$.
%
Note that the points along the line $\lambda_{Cons} = \lambda_{Synch}$ are the only set of convergence rates where the Aldous' conjecture holds true.
Regarding the consensus state, the optimal convergence rate is the point $\lambda_{Cons} = 0.4$ and  $\lambda_{Synch} = 0.4$ in the pareto region which is obtained for $w_{123} = w_{12} = 1/5$ and for synchronous state, the optimal convergence rate can be obtained for any of the points along the line $\lambda_{Synch} = 0.5$.
Although the points on the convex curve in figure \ref{fig:G3D1Pareto} do not result in any of the optimal convergence rates but these points act as maximal points which can be used for trade-off between the convergence rates to the consensus states and the synchronous state.
\begin{figure}
  \centering
     \includegraphics[width=150mm]{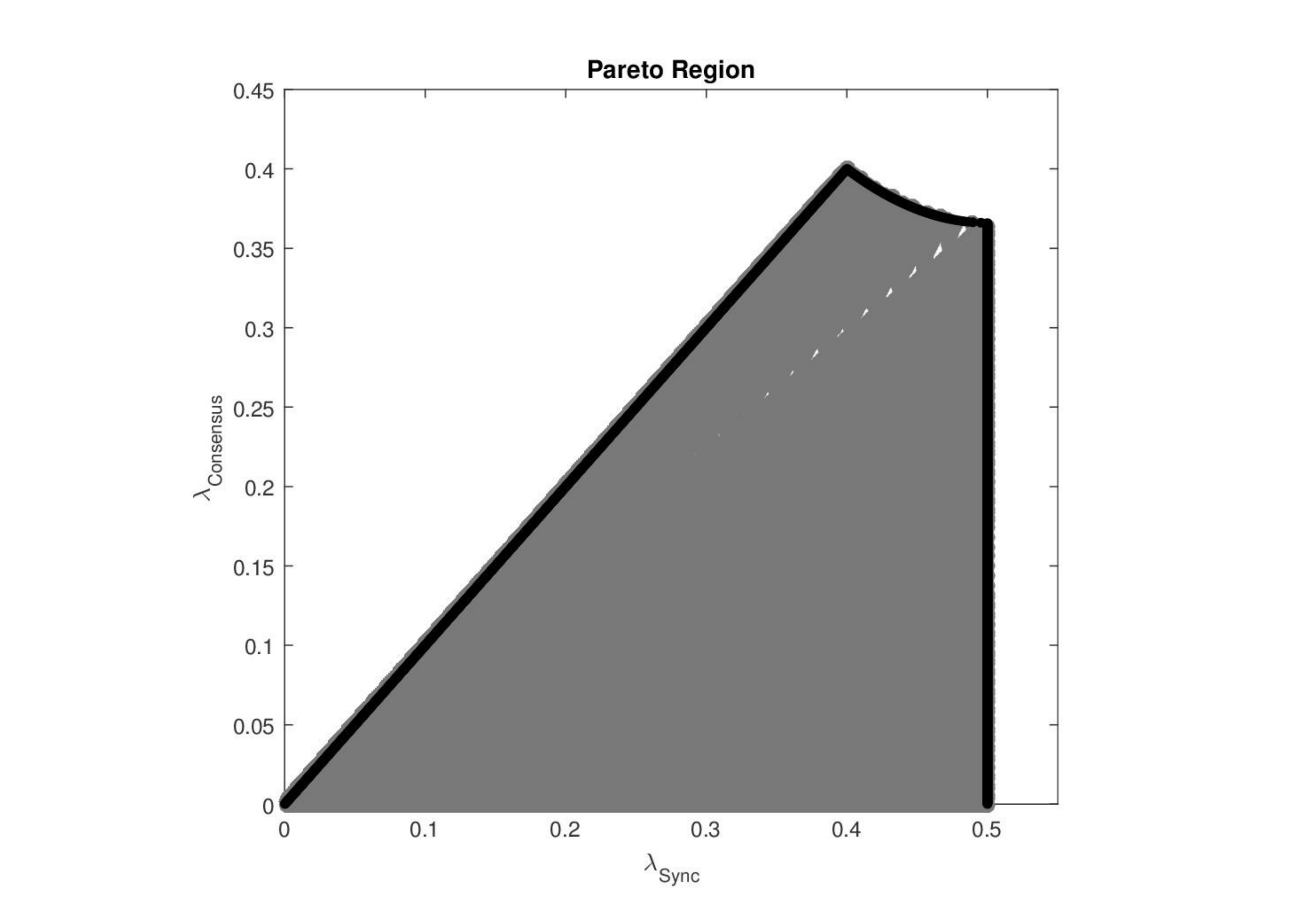}
  \caption{Pareto region for $\lambda_{Cons}$ and  $\lambda_{Synch}$ of the graph $\left( G_{1}(3) \right)$ depicted in figure \ref{fig:G3D1Graph} (a).}
  \label{fig:G3D1Pareto}
\end{figure}

The second topology $\left( G_{2}(3) \right)$ that we consider is a digraph with three qudits, two cycles and one transposition.
The underlying and the induced $(\boldsymbol{L}_{6})$ graphs of this topology are depicted in figure \ref{fig:G3D2Graph}.
The weight on edges of the cycles and the transposition are denoted by $w_{123}$, $w_{321}$ and $w_{12}$, respectively.
%
\begin{figure}
  \centering
     \includegraphics[width=110mm]{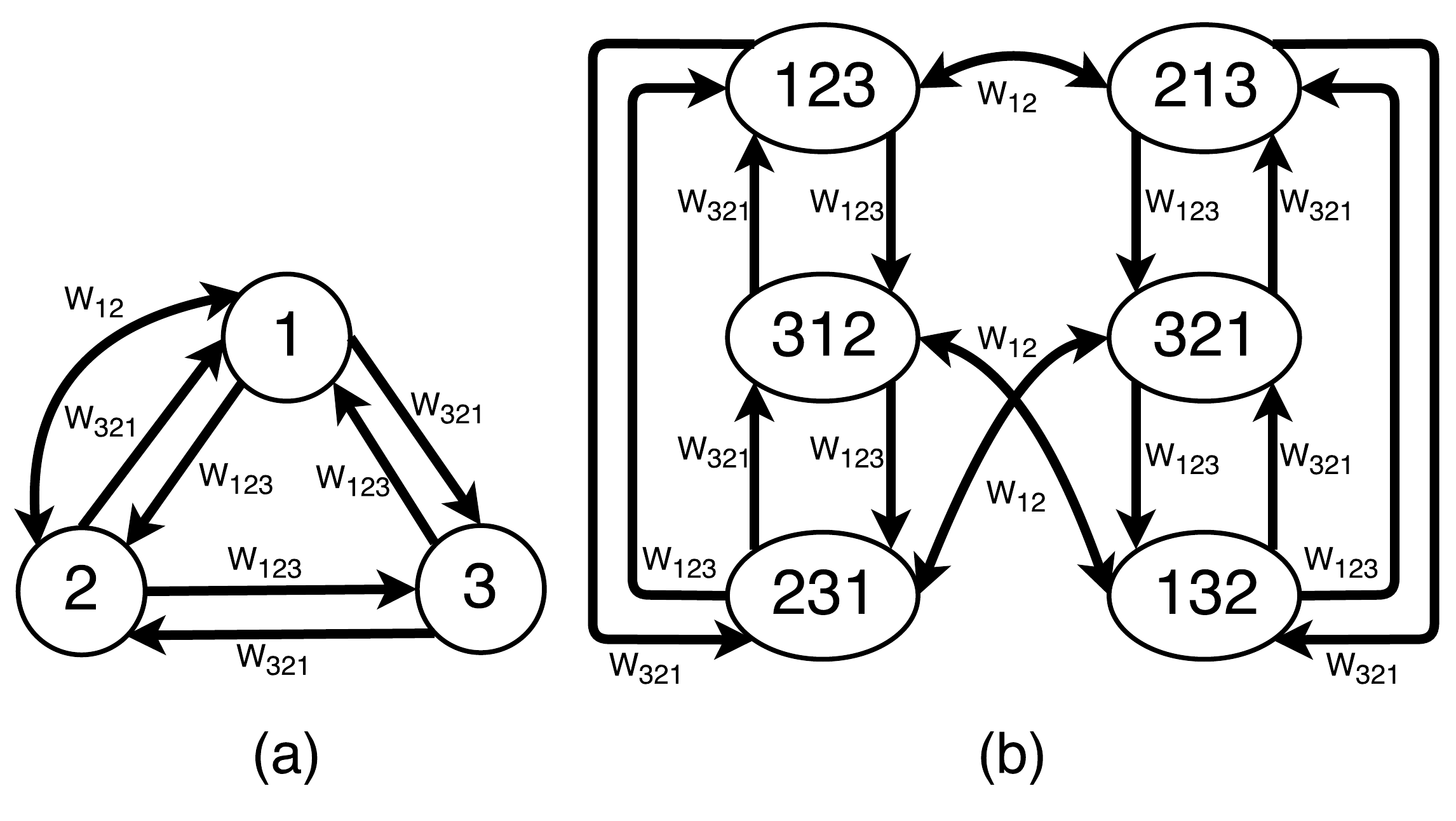}
  \caption{(a) Underlying graph $G_{1}(3)$ and (b) the resultant induced graph.}
  \label{fig:G3D2Graph}
\end{figure}
%
The Laplacian matrices for the underlying $(\boldsymbol{L}_U)$ and the induced $(\boldsymbol{L}_{6})$ graphs of this topology are as below,
\begin{equation}
\begin{gathered}
    \nonumber
    \boldsymbol{L}_3 =
    \small{\left[ \begin{array}{ccc}
    { w_{123} + w_{321} + w_{12} }        &{ -w_{321}-w_{12} }                &{ -w_{123} }       \\
    { -w_{123} - w_{12} }       &{ w_{123} + w_{321} + w_{12} }        &{ -w_{321} }              \\
    { -w_{321} }                       &{ -w_{123} }               &{ w_{123} + w_{321 }}    \end{array} \right]},
\end{gathered}
\end{equation}
%
%
\begin{equation}
\begin{aligned}
    \nonumber
    \boldsymbol{L}_6 =
    \left[ \begin{array}{cccccc}
    { d_2 }   &{ -w_{321} }   &{ -w_{123} }     &{ -w_{12} }      &{ 0 }      &{ 0 }        \\
    { -w_{123} }   &{ d_2 }   &{ -w_{321} }     &{ 0 }      &{ 0 }      &{ -w_{12} }        \\
    { -w_{321} }   &{ -w_{123} }   &{ d_2 }     &{ 0 }      &{ -w_{12} }      &{ 0 }        \\
    { -w_{12} }   &{ 0 }   &{ 0 }     &{ d_2 }      &{ -w_{321} }      &{ -w_{123} }        \\
    { 0 }   &{ 0 }   &{ -w_{12} }     &{ -w_{123} }      &{ d_2 }      &{ -w_{321} }        \\
    { 0 }   &{ -w_{12} }   &{ 0 }     &{ -w_{321} }      &{ -w_{123} }      &{ d_2 }
    \end{array} \right],
\end{aligned}
\end{equation}
where $d_2 = w_{123} + w_{321} + w_{12}$.
%
%
%
%
%
The nontrivial eigenvalues of $\boldsymbol{L}_{U}$ are 
$A_2   \pm   \sqrt{ B_2 } / 2$
where $A_2 = \frac{3}{2} w_{123}   +   \frac{3}{2} w_{321}   +   w_{12}$ and $B_2 = -3w_{321}^{2} + 6 w_{321} w_{123} + 4 w_{12}^2 - 3 w_{123}^{2}$ and the second smallest eigenvalue of $\boldsymbol{L}_{U}$ is $A_2   -   \sqrt{ B_2 }/2$.
In the case of $\boldsymbol{L}_{6}$, the nontrivial eigenvalues are 
$A_2   \pm   \sqrt{ B_2 }/2$
and $2w_{12}$ and the second smallest eigenvalue of $\boldsymbol{L}_{6}$ is 
$\min \left( 2w_{12} , A_2   -   \sqrt{ B_2 }/2 \right)$.
Hence the convergence rate to the synchronous state and the consensus state is dictated by the following,
%
\begin{equation}
    \begin{gathered}
        \label{eq:G3D2LambdaSynch}
        \lambda_{Synch} = Re \left( A_2   -   \sqrt{ B_2 } / 2 \right),
    \end{gathered}
\end{equation}
%
%
\begin{equation}
    \begin{gathered}
        \label{eq:G3D2LambdaConsensus}
        \lambda_{Cons} = \min \left( 2w_{12} , Re \left( A_2   -   \sqrt{ B_2 }/2 \right) \right).
    \end{gathered}
\end{equation}
With constraint $3w_{123} + 3w_{321} + 2w_{12} \leq 1$, format of the pareto region for this topology is similar to that of $G_1(3)$, i.e. the pareto region is bounded by the vertical line $\lambda_{Synch} = 0.5$, the line $\lambda_{Cons} = \lambda_{Synch}$ and a convex curve between these two lines.
The boundaries of the pareto region are obtained for the case that either one of $w_{321}$ or $w_{123}$ is zero.
In this case, this topology reduces to $G_{1}(3)$.
%
In other words, the second cycle is redundant as it is slowing down the convergence rates to both the consensus state and the synchronous state.
Interestingly, for the case $w_{321} = w_{123}$ (where the underlying graph is an undirected graph), the Aldous's conjecture does not always hold true.
As an example for $w_{321} = w_{123} > \frac{2}{3} w_{12}$, the convergence rates are $\lambda_{Synch} = 3w_{123} > \lambda_{Cons} = 2w_{12}$.
%
%
%
Similar to $G_{1}(3)$ topology, the points along the line $\lambda_{Cons} = \lambda_{Synch}$ are the only set of convergence rates where the Aldous' conjecture holds true.

The third topology $\left( G_{3}(3) \right)$ that we have analyzed is a digraph with three qudits and two transpositions.
This topology is identical to an undirected path graph with three vertices.
The weight on edges of the transpositions are denoted by $w_{12}$ and $w_{23}$.
%
The Laplacian matrices for the underlying $(L_U)$ of this topology is as below,
\begin{equation}
\begin{gathered}
    \nonumber
    \boldsymbol{L}_U =
    \small{\left[ \begin{array}{ccc}
    { w_{12} }        &{ -w_{12} }                &{ 0 }       \\
    { -w_{12} }       &{ w_{12} + w_{23} }        &{ -w_{23} }              \\
    { 0 }             &{ -w_{23} }                &{ w_{23} }    \end{array} \right]},
\end{gathered}
\end{equation}
In \cite{SaberQConsensusContinuous} it is shown that for topologies with undirected and connected underlying graphs, the convergence rate of the quantum consensus algorithm to the consensus state is obtained from the second smallest eigenvalue of the Laplacian matrix of the underlying graph.
Thus it can be concluded that independent of the value of the weights, for topologies with undirected and connected underlying graphs, the convergence rates to the consensus and the synchronous states are the same.
Hence for topology  $G_{3}(3)$, the convergence rates to consensus and the synchronous states are always equal and
considering the constraint $2w_{12} + 2w_{23} = 1$,
the pareto region for this topology would be a direct line between points $\lambda_{Cons} = \lambda_{Synch} = 0$  and $\lambda_{Cons} = \lambda_{Synch} = 1/4$.
Point $\lambda_{Cons} = \lambda_{Synch} = 1/4$ is the global optimal point in terms of both convergence rates which is obtained for $w_{12} = w_{23} = 1/4$.

The fourth topology that we have considered in this section is a digraph with four qudits, one cycle of length four and two transpositions.
The Laplacian matrix of this topology is as below,
\begin{equation}
\begin{gathered}
    \nonumber
    \boldsymbol{L}_U =
    \left[ \begin{array}{cccc}
    { w_{1234} + w_{12} }        &{ -w_{12} }                &{ 0 }       &{ -w_{1234} }      \\
    { -w_{1234} - w_{12} }       &{ w_{1234} + w_{12}  }      &{ 0 }       &{ 0 }      \\
    { 0 }             &{ -w_{1234} }      &{ w_{1234} + w_{34} }       &{ -w_{34} }      \\
    { 0 }             &{ -w_{23} }                &{ -w_{1234} - w_{34} }       &{ w_{1234} + w_{34} }      \end{array} \right],
\end{gathered}
\end{equation}

The weight on edges of the cycle and the transposition are denoted by $w_{1234}$ and $w_{12}$, respectively.
This topology has four induced graphs of sizes $4$, $6$, $12$ and $24$.
We denote the Laplacian matrices of these induced graphs by $L_{4}$, $L_{6}$, $L_{12}$, $L_{24}$ and the second smallest eigenvalues of each one of these matrices by $\lambda_{2}(L_{4})$, $\lambda_{2}(L_{6})$, $\lambda_{2}(L_{12})$ and $\lambda_{2}(L_{24})$.
$L_{4}$ is identical to the Laplacian matrix of the underlying graph of the topology.
The smallest eigenvalue of each one of these Laplacian matrices is zero.
Using the intertwining relation \cite{SaberQConsensusContinuous} between laplacian matrices of the induced graphs,
we can state that all eigenvalues of $L_{4}$, $L_{6}$ and $L_{12}$ are amongst those of $L_{6}$, $L_{12}$ and $L_{24}$, respectively.
%
%
Since $L_{12}$ includes all irreducible representations of $L_{24}$ (except the one corresponding to the largest eigenvalue) then based on \cite{SaberQConsensusDiscrete}, it can be concluded that all eigenvalues of $L_{24}$ (except its largest eigenvalue) are amongst the eigenvalues of $L_{12}$.
Therefore, the convergence rates to the synchronous and consensus states are equal to $\lambda_{2}(L_{4})$ and
$\lambda_{2}(L_{12})$, respectively.

As depicted in figure \ref{fig:G4D5Pareto}, the boundary of the pareto region for this topology is bounded by lines $\lambda_{Cons} = \lambda_{Synch}$ and $\lambda_{Synch} = 0.25$ and two concave curves.
The line $\lambda_{Cons} = \lambda_{Synch}$ stretches  between points $\lambda_{Cons} = \lambda_{Synch} = 0$ and $\lambda_{Cons} = \lambda_{Synch} = 0.1326$ where the latter point is obtained for $w_{1234} = 0.09745$, $w_{12} = 0.1548$, $w_{34} = 0.1503$ and it is denoted as point $A$ in figure \ref{fig:G4D5Pareto}.
The vertical line $\lambda_{Synch} = 0.25$ stretches  between points $\lambda_{Cons} = 0 $, $\lambda_{Synch} = 0.25$ and $\lambda_{Cons} = 0.1699$, $\lambda_{Synch} = 0.25$ where the latter point is obtained for $w_{1234} = 0.1535$, $w_{12} = 0.097$, $w_{34} = 0.096$ and it is denoted as point $C$ in figure \ref{fig:G4D5Pareto}.
Note that this point is the global optimal point in regards to the convergence rates of the algorithm to both consensus and synchronous states.
The point that two boundary curves 
meet each other
is $\lambda_{Cons} = 0.15457 $, $\lambda_{Synch} = 0.19731$ which is obtained for $w_{1234} = 0.11995$, $w_{12} = 0.212$, $w_{34} = 0.0481$ and it is denoted as point $B$ in figure \ref{fig:G4D5Pareto}.

\begin{figure}
  \centering
     \includegraphics[width=140mm]{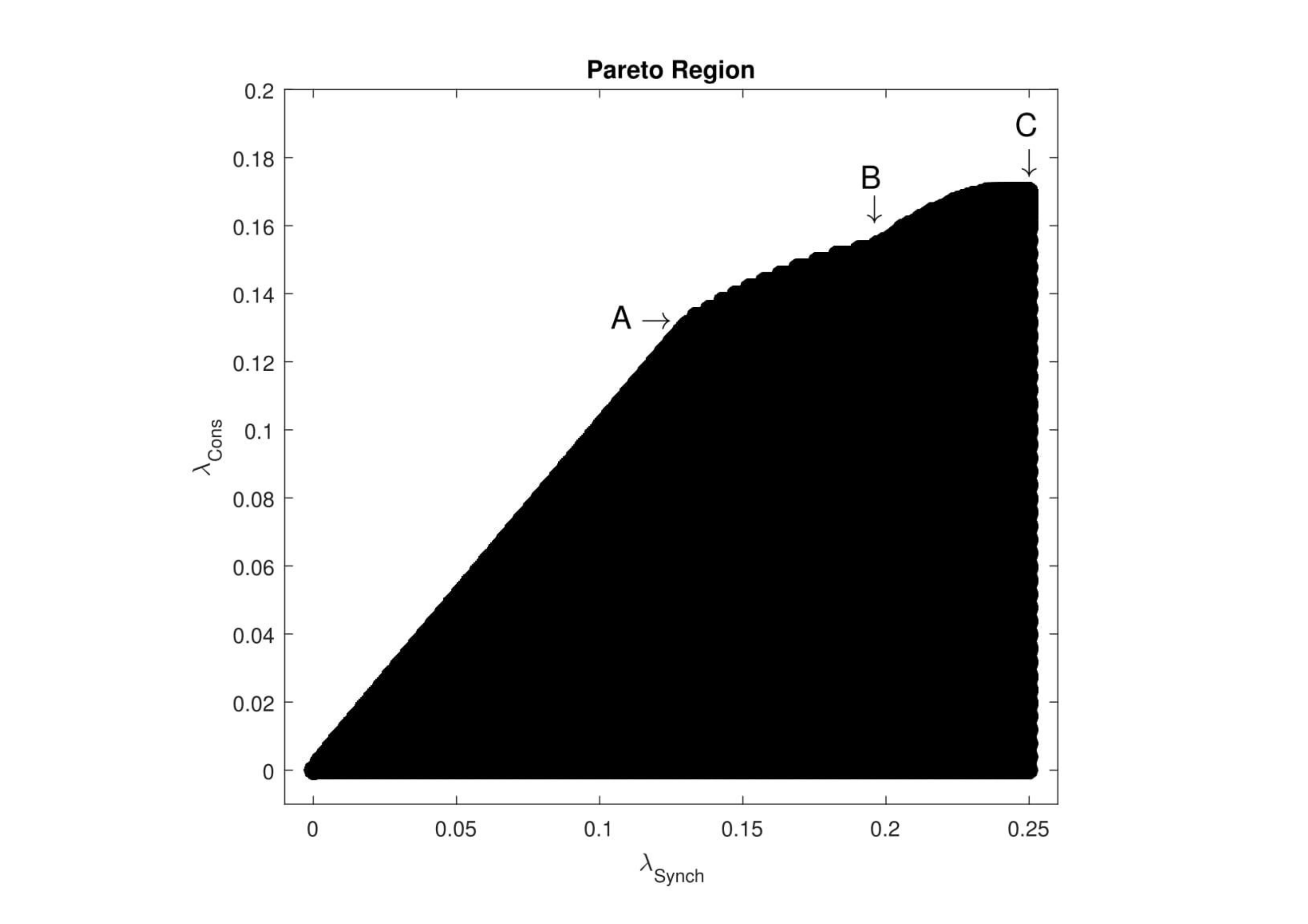}
  \caption{Pareto region for $\lambda_{Cons}$ and  $\lambda_{Synch}$ of the graph $G_{1}(4)$.}
  \label{fig:G4D5Pareto}
\end{figure}

\section{Conclusions}
\label{sec:Conclusion}

Considering a network of qudits, we have studied the convergence rate of the distributed consensus algorithm with general (i.e. either directed or undirected) underlying topology towards consensus and synchronous states.
%
%
We have established that the convergence rate to both states are equal iff the Aldous' conjecture holds true.
In case of networks that their underlying graph contains cycles, the Aldous' conjecture does not necessary hold true and it should be analyzed per case for each combination of weights.
%
%
%
%
%
%
%
%
%
%
%
%
%
%
In our future work, we will study the relation between the convergence rates and the Aldous' conjecture by
relaxing the consensus state to a symmetric state which is invariant to only a subset of all permutations.
%
%
%
Other future studies will focus on analysing the discrete-time model of the quantum consensus algorithm and the quantum gossip algorithm over quantum networks with general underlying topologies.

\appendices


\begin{thebibliography}{99}





%
%





%
%







\bibitem{BoydBook2004}
S.~Boyd and L.~Vandenberghe, \emph{Convex Optimization}, \hskip 1em plus
  0.5em minus 0.4em\relax Cambridge University Press, New York, NY, USA. 2004.




%
\bibitem{Xiao04Boyd}
L.~Xiao and S.~Boyd, \emph{Fast Linear Iterations for Distributed Averaging}, \hskip 1em plus
  0.5em minus 0.4em\relax Systems and Control Letters, vol. 53, pp. 65-78, 2004.



%
%




%
%




\bibitem{ProofAldous}
P.~Caputo and T.~M.~Liggett, \emph{Proof of Aldous’ spectral gap conjecture}, \hskip 1em plus
  0.5em minus 0.4em\relax Journal American Math Society, vol. 23, pp. 831-851, 2010.





\bibitem{Chung1997}
F.~R.~K.~Chung, \emph{Spectral graph theory}, \hskip 1em plus
  0.5em minus 0.4em\relax American Mathematical Society, 1997.



%
%
%



%
%





%
%
%







%
%





%
%






%
%






%
%







%
%














\bibitem{SaberThesis2015}
S.~Jafarizadeh, \emph{Distributed coding and algorithm optimization for large-scale networked systems}, \hskip 1em plus
  0.5em minus 0.4em\relax Ph.D. Dissertation. University of Sydney, NSW, 2015.








%
%
%










%
%
%












%
%













\bibitem{MazzarellaCDC2013}
L.~Mazzarella, A.~Sarlette and F.~Ticozzi, \emph{A new perspective on gossip iterations: From Symmetrization to quantum consensus}, \hskip 1em plus
  0.5em minus 0.4em\relax IEEE 52nd Annual Conference on Decision and Control (CDC), pp. 250-255, 2013.











\bibitem{PetersenRef15}
L.~Mazzarella, A.~Sarlette and F.~Ticozzi, \emph{Consensus for Quantum Networks: Symmetry From Gossip Interactions}, \hskip 1em plus
  0.5em minus 0.4em\relax IEEE Trans. Automat. Control, vol. 60, no. 1, pp. 158-172, Jan. 2015.











%










%










%
%












%














%











\bibitem{Petersen2015IEEETranAutControl}
G.~Shi, D.~Dong, I.~R.~Petersen and K.~Johansson, \emph{Reaching a Quantum Consensus: Master Equations that Generate Symmetrization and Synchronization}, \hskip 1em plus
  0.5em minus 0.4em\relax IEEE Trans. Automat. Control, pp. 1-14, 2015.













\bibitem{Petersen2015ACCPartI}
G.~Shi, S.~Fu and I.~R.~Petersen, \emph{Quantum network reduced-state synchronization part I-convergence under directed interactions}, \hskip 1em plus
  0.5em minus 0.4em\relax American Control Conference (ACC), 2015.










\bibitem{Petersen2015ACCPartII}
G.~Shi, S.~Fu and I.~R.~Petersen, \emph{Quantum network reduced-state synchronization part II-the missing symmetry and switching interactions}, \hskip 1em plus
  0.5em minus 0.4em\relax American Control Conference (ACC), pp. 92-97, 2015.











%




%







\bibitem{SaberQConsensusContinuous}
S.~Jafarizadeh, \emph{Optimizing the Convergence Rate of the Continuous Time Quantum Consensus}, \hskip 1em plus
  0.5em minus 0.4em\relax arXiv 1509.05823, 2015.







\bibitem{SaberQConsensusDiscrete}
S.~Jafarizadeh, \emph{Optimizing the Convergence Rate of the Quantum Consensus: A Discrete Time Model}, \hskip 1em plus
  0.5em minus 0.4em\relax arXiv 1510.05178v2, 2015.



\end{thebibliography}
\end{document}